\documentclass[12pt]{article}
\usepackage{graphicx}
\usepackage{hyperref}
\usepackage{amsthm}
\usepackage{authblk}
\usepackage{amssymb}
\usepackage{amsmath}
\usepackage{mathrsfs}
\usepackage[numbers,square]{natbib}
\usepackage{geometry}
\usepackage{tikz}
\usetikzlibrary{arrows.meta, positioning, shapes}
\newtheorem{defn}{Definition}
\newtheorem{prop}{Proposition}
\newtheorem{theorem}{Theorem}

\newtheorem{rem}{Remark}

\title{Topological Quantization of Complex Velocity in Stochastic Spacetimes}
\author[1]{Jorge Meza-Domínguez\thanks{E-mail: \href{mailto:jorge.meza@cinvestav.mx}{jorge.meza@cinvestav.mx}}}
\author[2]{Tonatiuh Matos\thanks{E-mail: \href{mailto:tonatiuh.matos@cinvestav.mx}{tonatiuh.matos@cinvestav.mx}}}
\affil[1,2]{Departamento de Física, Centro de Investigación y de Estudios Avanzados del Instituto Politécnico Nacional, Av. Instituto Politécnico Nacional 2508, San Pedro Zacatenco, México 07360, CDMX.}
\date{}

\begin{document}

\maketitle

\begin{abstract}
We establish a rigorous geometric framework for quantum fields on a stochastic gravitational background. Starting from a master partition function that averages over metric fluctuations, we define a matter amplitude $\mathcal{K}$, whose logarithmic derivative yields a complex velocity field $\eta_{\mu} = \pi_{\mu} - i u_{\mu}$. This object, originating in Nelson's stochastic mechanics \cite{Chavanis2024}, is a section of the pullback bundle $E = \pi_2^*(T^*M)$ over the product of configuration space $\mathcal{C}$ and spacetime $M$. We prove that $\eta_{\mu}$ defines a flat $U(1)$ connection with $\mathcal{K}$ as its horizontal section, and via a bundle isomorphism \cite{mezadominguez2026bundle} it maps to the symmetric logarithmic derivative of quantum estimation theory \cite{helstrom_1976,holevo_1982,braunstein1994,paris2009}. The coupled dynamics collapse into $\mathcal{L}_{\eta}\eta = d(|\eta|^2)$. We resolve the tension between flatness and multi-valuedness: although the connection is flat, the potential can be multi-valued from topological terms or branch cuts \cite{berry_1984}. The total phase satisfies $\frac{m}{\hbar}\oint_\gamma \eta_{\mu} dx^{\mu} = 2\pi n + \Delta\phi_{\text{top}}$. We demonstrate this in a toy model: a scalar field on a conical spacetime with deficit angle $\alpha$, computing the matter amplitude in the Gaussian approximation \cite{kubo_1962,van_kampen_1974,van_kampen_2007}, deriving the complex velocity, and calculating its holonomy. The resulting topological offset receives a quantized stochastic correction depending on the variance of metric fluctuations, providing an experimental signature for atom interferometry \cite{Abe2021}. This framework geometrizes quantum mechanics without hidden variables \cite{bohm_1952,holland_1993}: stochasticity imprints spacetime fluctuations on matter, preserving the wave function's probabilistic nature while giving a geometric origin for the Born rule.
\end{abstract}

\section{Introduction}
Quantum mechanics and general relativity are the two foundational pillars of 
modern physics. A longstanding challenge is to understand how quantum behavior 
interacts with classical gravitational backgrounds, and whether gravitational 
degrees of freedom can play a role in the emergence of quantum phenomena. 
The framework of Stochastic Quantum Gravity (SQG), initiated in 
\cite{Escobar-Aguilar:2023ekv}, explores one specific aspect of this question: 
the possibility that a classical stochastic background of gravitational waves 
induces an effective stochastic dynamics in matter fields. This approach builds 
on the path integral formulation of quantum field theory \cite{weinberg_qft1_1995,peskin_schroeder_1995,ramond_2001} 
and the functional methods pioneered by Schwinger \cite{schwinger_1951} and 
Faddeev-Popov \cite{faddeev_popov_1967}, adapted to a background with intrinsic 
stochasticity.

The physical picture is straightforward. Spacetime is filled with gravitational 
waves across a vast range of scales. This is established observationally at 
parsec scales by Pulsar Timing Arrays (see, e.g., \cite{Liu:2026tik}). If such 
a stochastic background extends to smaller scales---a hypothesis that remains 
to be tested, and which has been explored in post-quantum gravity scenarios 
\cite{oppenheim_2023,oppenheim_nature_2023} and quantum gravity models 
\cite{hawking_1982,starobinsky_1982,koch_2025}---then microscopic particles 
would not follow pure geodesics. Instead, their motion would acquire a 
stochastic component, analogous to a Brownian particle in a fluctuating medium. 
In \cite{Escobar-Aguilar:2023ekv} it was shown that a particle following a 
geodesic plus a stochastic term satisfies the Klein--Gordon equation for the 
complex function $\Phi = \sqrt{n}\,e^{i\theta}$, with $\pi_{\mu} \sim 
\nabla_{\mu}\theta$ the geodesic velocity and $u_{\mu} \sim \nabla_{\mu}\ln n$ 
the stochastic velocity. In the Newtonian limit this reduces to the 
Schr\"odinger equation. The present work adopts this semiclassical framework 
as a working hypothesis and develops its geometric and information-theoretic 
consequences.

The hydrodynamic formulation of quantum mechanics, introduced by Madelung 
\cite{madelung_1927} and further developed by Bohm \cite{bohm_1952}, provides 
the natural language for this investigation. From the polar decomposition 
$\Psi = \sqrt{\rho}\,e^{iS/\hbar}$, two real velocity fields are defined:
\begin{equation}
\pi_{\mu} = \frac{1}{m}\nabla_{\mu}S, \qquad 
u_{\mu} = \frac{\hbar}{2m}\nabla_{\mu}\ln\rho. \label{eq:velocities}
\end{equation}
The field $\pi_{\mu}$ governs the geodesic component of the flow, while the 
stochastic (or osmotic) velocity $u_{\mu}$ encodes density gradients 
\cite{sbitnev_2012,wyatt_1999}. In Nelson's stochastic quantization framework, 
reviewed recently by Chavanis \cite{Chavanis2024}, both velocities arise from a 
Brownian motion model. The SQG approach \cite{Escobar-Aguilar:2023ekv} unifies 
them into a single complex velocity $\eta_{\mu} = \pi_{\mu} - i u_{\mu}$, where 
the stochastic component is traced to the statistical properties of a 
fluctuating gravitational background rather than postulated as an independent 
Wiener process.

We emphasize the methodological status of this framework. It is a 
geometrization of the hydrodynamic description of quantum fields on a 
stochastic classical background, not a proposal for a new interpretation of 
quantum mechanics. No claim is made about the fundamental nature of the wave 
function, the existence of hidden variables, or the ontology of quantum 
states. The stochastic velocity $u_{\mu}$ is treated as a derived quantity 
whose physical origin---whether classical gravitational waves, quantum metric 
fluctuations, or other sources---is not specified by the formalism itself. 
The only assumption is that the background fluctuations admit a statistical 
description and that the Markovian approximation holds. The wave function 
$\Phi = \sqrt{n}\,e^{i\theta}$ retains its role as the fundamental descriptor 
of the boson gas, and the Born rule is preserved.

This paper provides the differential-geometric foundation for this framework 
and an explicit analytical illustration. Our goals are:
\begin{enumerate}
    \item To prove that $\eta_{\mu}$ is a section of the pullback bundle 
    $\pi_2^*(T^*M)$ and defines a flat $U(1)$ connection.
    \item To establish the information-geometric content via the bundle 
    isomorphism to the symmetric logarithmic derivative operator.
    \item To derive the complex geodesic equation and the associated von 
    Neumann entropy \cite{nielsen_chuang_2010,vedral_2002,bennett_1996}.
    \item To resolve the holonomy quantization, including the role of 
    multi-valued phases \cite{berry_1984}.
    \item To demonstrate the entire machinery in an analytically solvable toy 
    model: a free scalar field on a conical spacetime \cite{nakahara_2003,frankel_2011}.
\end{enumerate}

\section{The Averaged Matter Amplitude}

Let $M$ be an $n$-dimensional spacetime manifold. The total action for matter fields $\Phi$ is $S[\Phi; g]$. We consider a stochastic metric $g_{\mu\nu} = g_{\mu\nu}^{(0)} + h_{\mu\nu}$, where $h_{\mu\nu}$ is drawn from a probability distribution $P[h]$ with $\langle h_{\mu\nu} \rangle = 0$ and $\langle h_{\mu\nu}(x) h_{\alpha\beta}(x') \rangle = C_{\mu\nu\alpha\beta}(x,x')$. The rigorous construction of such measures on infinite-dimensional spaces follows from the framework of constructive quantum field theory \cite{glimmjaffe1987,reed1980}.

\begin{defn}[Matter Amplitude]
The matter amplitude $\mathcal{K}$ is defined by averaging over gravitational fluctuations before the matter path integral:
\begin{equation}
\mathcal{K}[\Phi] := \int \mathcal{D}[h] P[h] \, e^{\frac{i}{\hbar}S[\Phi; g^{(0)}+h]}.
\label{eq:defK}
\end{equation}
This is a complex functional on the infinite-dimensional configuration space $\mathcal{C}$ of all matter fields $\Phi$, whose differential geometry is formalized in the convenient setting of Kriegl and Michor \cite{kriegl_michor_1997}. Related bundle structures over configuration spaces have been explored in the context of conformal connection-dynamics \cite{reid_wang_2012} and quantum information geometry \cite{fujiwara_imai_2008}.
\end{defn}

The full partition function is $Z = \int \mathcal{D}[\Phi] \mathcal{K}[\Phi]$. We assume $\mathcal{K}$ is non-vanishing and admits a smooth polar decomposition:
\begin{equation}
\mathcal{K}[\Phi] = \sqrt{\mathcal{P}[\Phi]} \, e^{\frac{i}{\hbar}\mathcal{S}[\Phi]}, \qquad \mathcal{P}[\Phi] > 0,
\end{equation}
where $\mathcal{P}[\Phi]$ and $\mathcal{S}[\Phi]$ are real-valued functionals on $\mathcal{C}$. Expanding the action in powers of $h$, $S = S_0 + S_1 + S_2 + \cdots$, and using the cumulant expansion \cite{kubo_1962,van_kampen_1974,van_kampen_2007} with $\langle S_1 \rangle_h = 0$, we obtain:
\begin{align}
\mathcal{S}[\Phi] &= S_0[\Phi] + \langle S_2[\Phi; h] \rangle_h + \cdots, \label{eq:cumulantS} \\
\mathcal{P}[\Phi] &= \exp\left\{ -\frac{1}{\hbar^2}\langle S_1^2[\Phi; h] \rangle_h + \cdots \right\}. \label{eq:cumulantP}
\end{align}
Thus, $\mathcal{S}[\Phi]$ is the classical action corrected by the averaged second-order gravitational coupling, and $\mathcal{P}[\Phi]$ is generated by the variance of the linear matter-gravity coupling.

\section{The Geometry of the Complex Velocity}

\subsection{Definition and Bundle Structure}

\begin{defn}[Complex Velocity]
The complex velocity $\eta_{\mu}$ is the logarithmic derivative of the matter amplitude:
\begin{equation}
\eta_{\mu}[\Phi](x) := -i\frac{\hbar}{m} \frac{\nabla_{\mu}\mathcal{K}[\Phi](x)}{\mathcal{K}[\Phi](x)} = -i\frac{\hbar}{m} \nabla_{\mu} \ln \mathcal{K}[\Phi](x).
\end{equation}
\end{defn}

Using the polar decomposition, we obtain the unification of the Madelung-Bohm velocities:
\begin{equation}
\eta_{\mu}[\Phi] = \frac{1}{m}\nabla_{\mu}\mathcal{S}[\Phi] - i\frac{\hbar}{2m}\nabla_{\mu}\ln\mathcal{P}[\Phi] \equiv \pi_{\mu}[\Phi] - i u_{\mu}[\Phi]. \label{eq:unification}
\end{equation}
The stochastic velocity $u_{\mu}$ is directly linked to the variance of gravitational fluctuations:
\begin{equation}
u_{\mu}[\Phi] = \frac{\hbar}{2m}\nabla_{\mu}\ln\mathcal{P}[\Phi] = -\frac{1}{2m\hbar} \nabla_{\mu}\langle S_1^2[\Phi; h] \rangle_h.
\end{equation}
This unification parallels Nelson's stochastic mechanics \cite{Chavanis2024}, where the complex velocity emerges from a Brownian motion model. In our framework, the stochasticity originates from spacetime fluctuations rather than an ad hoc Wiener process.

\subsubsection{The Double Bundle Structure}

The complex velocity $\eta_{\mu}$ has a dual dependence: it is a functional on the configuration space $\mathcal{C}$ and a covector field on spacetime $M$. To formalize this, let $\mathcal{C}$ be the infinite-dimensional Fréchet manifold of all field configurations \cite{kriegl_michor_1997}. The product manifold is $\mathcal{E} := \mathcal{C} \times M$, with canonical projections $\pi_1: \mathcal{E} \to \mathcal{C}$ and $\pi_2: \mathcal{E} \to M$.

\begin{defn}[Pullback Bundle]
The bundle $E \to \mathcal{C}\times M$ is
\begin{equation}
E := \pi_2^*(T^*M) = \{ (\Phi, x, v) \in \mathcal{E} \times T^*M \;|\; \pi(v) = x \}. \label{eq:def_bundle}
\end{equation}
The fibre at $(\Phi, x)$ is $T_x^*M$. This construction is standard in differential geometry \cite{nakahara_2003,frankel_2011} and has deep connections to quantum information geometry via fibre bundles over manifolds of quantum channels \cite{fujiwara_imai_2008,fujiwara_2001}.
\end{defn}

A section of $E$ is a smooth map $\eta: \mathcal{E} \to E$ satisfying $\pi_E \circ \eta = \operatorname{id}_{\mathcal{E}}$, where $\pi_E: E \to \mathcal{E}$ is the bundle projection.

\begin{prop}[$\eta_{\mu}$ as a Section of $E$]
The complex velocity is a smooth section: $\eta \in \Gamma(E) = \Gamma(\pi_2^*(T^*M) \to \mathcal{C}\times M)$.
\end{prop}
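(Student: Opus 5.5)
The plan is to construct $\eta$ pointwise, show it lands in the correct fibre, and then check smoothness on the product $\mathcal{E}=\mathcal{C}\times M$. Concretely, for each $(\Phi,x)\in\mathcal{E}$, set
\begin{equation*}
\eta(\Phi,x) := \bigl(\Phi,\,x,\,-i\tfrac{\hbar}{m}\,d_x \ln\mathcal{K}[\Phi](x)\bigr),
\end{equation*}
where $d_x$ is the exterior derivative on $M$ acting on the function $x\mapsto \ln\mathcal{K}[\Phi](x)$ with $\Phi$ held fixed. Since $\nabla_\mu$ acts on a scalar, it agrees with $\partial_\mu$, so no connection on $M$ enters. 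The value $d_x\ln\mathcal{K}[\Phi](x)$ lies in $T_x^*M\otimes\mathbb{C}$. Hence the triple belongs to the fibre of $E$ over $(\Phi,x)$, understood as the complexification $\pi_2^*(T^*M)\otimes\mathbb{C}$, which is what the statement implicitly means for a complex covector. The condition $\pi_E\circ\eta=\operatorname{id}_{\mathcal{E}}$ then holds by construction, because the base point is carried along unchanged.

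Two inputs are needed to make the logarithm well defined. First, the standing assumption that $\mathcal{K}$ is non-vanishing and has a smooth polar decomposition lets us write
\begin{equation*}
\ln\mathcal{K} = \tfrac12\ln\mathcal{P} + \tfrac{i}{\hbar}\mathcal{S},
\end{equation*}
with $\mathcal{P}>0$. This gives
\begin{equation*}
\eta = \tfrac1m\, d\mathcal{S} - i\tfrac{\hbar}{2m}\, d\ln\mathcal{P},
\end{equation*}
which is exactly \eqref{eq:unification}. Second, only the differential of the logarithm is used. Any branch ambiguity $2\pi i k$ in $\ln\mathcal{K}$ is locally constant, so its differential vanishes and $\eta$ is single-valued even if $\mathcal{S}$ is not. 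I would make this explicit by writing $\eta=-i\frac{\hbar}{m}\,\mathcal{K}^{-1}d_x\mathcal{K}$, which involves no choice of branch.

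Smoothness is the main obstacle, because $\mathcal{C}$ is an infinite-dimensional Fréchet manifold. I would work in the convenient calculus of Kriegl--Michor, where a map is smooth if and only if it carries smooth curves to smooth curves. I would then argue in three steps:
\begin{enumerate}
\item The map $(\Phi,x)\mapsto\mathcal{K}[\Phi](x)$ is smooth on $\mathcal{E}$. This is the smoothness already assumed in the polar decomposition, justified by differentiating \eqref{eq:defK} under the $h$-integral with the Gaussian measure $P[h]$ providing domination. To first nontrivial order it can also be read off from the cumulant formulas \eqref{eq:cumulantS}--\eqref{eq:cumulantP}, which are smooth (polynomial) in $\Phi$.
\item The partial derivative $d_x$ in the finite-dimensional $M$-direction of a smooth map on $\mathcal{C}\times M$ is again smooth, since it is a component of the total derivative.
\item Division by the nowhere-zero smooth function $\mathcal{K}$ preserves smoothness.
\end{enumerate}

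Finally, I would check the local-trivialisation picture. Over a chart $U\subset M$, $E|_{\mathcal{C}\times U}\cong\mathcal{C}\times U\times\mathbb{C}^n$, and in these coordinates $\eta$ has components $\eta_\mu(\Phi,x)$, which are smooth by the three steps above. Hence $\eta\in\Gamma(E)$.

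If the differentiation-under-the-integral step in (1) cannot be justified in general, I would state the smoothness of $(\Phi,x)\mapsto\mathcal{K}[\Phi](x)$ as the precise hypothesis that the polar-decomposition assumption encodes, and rely on it directly.
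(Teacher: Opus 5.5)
Your proposal is correct and follows the same route as the paper: you define $\eta(\Phi,x)$ pointwise from the logarithmic differential of $\mathcal{K}$ in the $M$-direction, note that it sits over $(\Phi,x)$ so that $\pi_E\circ\eta=\mathrm{id}$, and obtain smoothness from the smoothness of $\mathcal{K}$. Your additions are sound refinements of points the paper leaves implicit: taking the fibre to be the complexification $T_x^*M\otimes\mathbb{C}$, writing $\eta=-i\frac{\hbar}{m}\mathcal{K}^{-1}d_x\mathcal{K}$ so no branch of the logarithm is chosen, and the Kriegl--Michor smoothness bookkeeping; in the end, like the paper, you rest on smoothness of $(\Phi,x)\mapsto\mathcal{K}[\Phi](x)$ as a hypothesis.
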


\begin{proof}
For each $(\Phi, x) \in \mathcal{E}$, the covariant derivative $\nabla_{\mu}\ln\mathcal{K}[\Phi](x)$ is a covector at $x$, hence an element of $T_x^*M$. Define the map:
\begin{equation}
\eta: \mathcal{E} \longrightarrow E, \qquad \eta(\Phi, x) = \left( \Phi, x, -i\frac{\hbar}{m}\nabla\ln\mathcal{K}[\Phi](x) \right).
\end{equation}
By construction, $\pi_E(\eta(\Phi, x)) = (\Phi, x)$, so $\pi_E \circ \eta = \operatorname{id}_{\mathcal{E}}$, and the image lies in the fibre $T_x^*M$. Smoothness follows from that of $\mathcal{K}$.
\end{proof}

This bundle structure cleanly separates the two roles of $\eta_{\mu}$:
\begin{enumerate}
    \item \textbf{For fixed $\Phi_0$:} $\eta_{\mu}[\Phi_0](x)$ is an ordinary covector field on $M$. Its covariant derivative, divergence, and holonomy are computed using the spacetime connection.
    \item \textbf{Functional dependence on $\mathcal{C}$:} The variation under $\Phi \to \Phi + \delta\Phi$ is encoded in $\delta\eta_{\mu}(\Phi, x)/\delta\Phi(y)$, linking local spacetime quantities to global information on configuration space.
\end{enumerate}
The double fibration is visualized in Figure \ref{fig:doublebundle}.

\begin{figure}[ht]
\centering
\begin{tikzpicture}[scale=1.3, every node/.style={font=\small}]
    \node (E) at (0,4) {$E = \pi_2^*(T^*M)$};
    \node (CxM) at (0,2) {$\mathcal{C} \times M$};
    \node (M) at (3,0) {$M$};
    \node (C) at (-3,0) {$\mathcal{C}$};
    
    \draw[->] (E) -- node[right] {$\pi_E$} (CxM);
    \draw[->] (CxM) -- node[below right] {$\pi_2$} (M);
    \draw[->] (CxM) -- node[below left] {$\pi_1$} (C);
    
    \draw[->, blue!60!black, thick, 
          bend left=20] 
        (CxM.north) to 
        node[above, sloped, pos=0.5] {$\eta$} 
        (E.south);
    
    \node[blue!60!black, text width=5cm, align=center] 
        at (-2.2, 3.2) 
        {$\eta(\Phi,x) = \left(\Phi, x, -i\frac{\hbar}{m}\nabla\ln\mathcal{K}[\Phi](x)\right)$};
    
    \node at (0,4.5) {Fibre $\cong T_x^*M$};
    \node at (5,0) {Spacetime};
    \node at (-5,0) {Configuration space};
    
    \draw[dashed, gray] (CxM.west) -- ++(-1,0);
    \draw[dashed, gray] (CxM.east) -- ++(1,0);
    
    \filldraw[red] (3.2,0) circle (1.5pt) node[below right] {$x$};
    \node[red, right] at (E.east) {$\pi_E^{-1}(\Phi,x) \cong T_x^*M$};
    
\end{tikzpicture}
\caption{The double bundle structure. The complex velocity $\eta$ is a section of the pullback bundle $E = \pi_2^*(T^*M)$ over the product manifold $\mathcal{C} \times M$. For each pair $(\Phi, x)$, $\eta$ assigns a covector in the cotangent space $T_x^*M$, derived from the logarithmic gradient of the matter amplitude $\mathcal{K}$.}
\label{fig:doublebundle}
\end{figure}

\subsection{Connection to Quantum Estimation Theory}

Following the rigorous construction in \cite{mezadominguez2026bundle}, we introduce a fixed Gaussian measure $\nu_0$ on $\mathcal{C}$ via Minlos' theorem \cite{minlos_1959,bovier_2017}. The existence and equivalence properties of such Gaussian measures are governed by the Feldman-Hájek theorem \cite{feldman_1958,bogachev1998}. The Hilbert space is $\mathcal{H}_0 = L^2(\mathcal{C}, \nu_0)$. The state vectors are $|\Psi_x\rangle = \sqrt{\mathcal{P}} e^{i\mathcal{S}/\hbar} \in \mathcal{H}_0$, parametrized by $x \in M$, with $\| \Psi_x \|_0 = 1$ for all $x$.

The symmetric logarithmic derivative (SLD) $L_{\mu}(x)$ is the central object in quantum estimation theory \cite{helstrom_1976,holevo_1982}. For a smooth family of pure states $\rho_x = |\Psi_x\rangle\langle\Psi_x|$, the SLD is the self-adjoint operator satisfying the defining equation \cite{braunstein1994,paris2009}:
\begin{equation}
\partial_{\mu}\rho_x = \frac{1}{2}\{L_{\mu}(x), \rho_x\}.
\end{equation}
For pure states, the general solution with zero expectation value $\langle \Psi_x | L_{\mu}(x) | \Psi_x \rangle = 0$ is given by the non-local operator \cite{mezadominguez2026bundle}:
\begin{equation}
L_{\mu}(x) = 2|\partial_{\mu}\Psi_x\rangle\langle\Psi_x| + 2|\Psi_x\rangle\langle\partial_{\mu}\Psi_x| - 2\langle\partial_{\mu}\Psi_x|\Psi_x\rangle_0 \mathbb{I} - 2\langle\Psi_x|\partial_{\mu}\Psi_x\rangle_0 \mathbb{I}. \label{eq:SLD_nonlocal}
\end{equation}

To express this in terms of the complex velocity, we use the identities derived from the polar decomposition $\Psi_x = \sqrt{\mathcal{P}} e^{i\mathcal{S}/\hbar}$. A direct calculation gives:
\begin{equation}
\partial_{\mu}\Psi_x = \left( \frac{1}{2}\partial_{\mu}\ln\mathcal{P} + \frac{i}{\hbar}\partial_{\mu}\mathcal{S} \right) \Psi_x = \frac{im}{\hbar}\eta_{\mu}\Psi_x,
\end{equation}
where $\eta_{\mu} = \pi_{\mu} - i u_{\mu}$ and we used $\partial_{\mu}\ln\mathcal{P} = -\frac{2m}{\hbar}u_{\mu}$ and $\partial_{\mu}\mathcal{S} = m\pi_{\mu}$. Similarly, for the conjugate:
\begin{equation}
\overline{\partial_{\mu}\Psi_x} = -\frac{im}{\hbar}\bar{\eta}_{\mu}\overline{\Psi_x}, \qquad \bar{\eta}_{\mu} = \pi_{\mu} + i u_{\mu}.
\end{equation}

Introducing multiplication operators $(\hat{\eta}_{\mu}\psi)(\Phi) = \eta_{\mu}(\Phi, x)\psi(\Phi)$ and the orthogonal projector $\hat{P}_x = |\Psi_x\rangle\langle\Psi_x|$, the SLD \eqref{eq:SLD_nonlocal} takes the compact form:
\begin{equation}
\widetilde{\mathcal{T}}(\eta)_{\mu} = \frac{2im}{\hbar}\big(\hat{\eta}_{\mu}\hat{P}_x - \hat{P}_x\hat{\bar{\eta}}_{\mu}\big) + \frac{2im}{\hbar}\big(\langle\hat{\bar{\eta}}_{\mu}\rangle_x - \langle\hat{\eta}_{\mu}\rangle_x\big)\mathbb{I}, \label{eq:isomorphism_nonlocal}
\end{equation}
where $\langle \cdot \rangle_x = \langle \Psi_x | \cdot | \Psi_x \rangle_0$. This defines the bundle isomorphism $\widetilde{\mathcal{T}}: \Gamma(E/\sim) \to \Gamma(\mathcal{L})$ established in \cite{mezadominguez2026bundle}. The gauge equivalence $\eta_{\mu} \sim \eta_{\mu} + i c_{\mu}(x)$ with $c_{\mu} \in \mathbb{R}$ leaves $\widetilde{\mathcal{T}}(\eta)_{\mu}$ invariant modulo operators that annihilate $|\Psi_x\rangle$, which are invisible to the quantum Fisher metric.

\begin{rem}[Action on the Reference State]
Although the SLD \eqref{eq:isomorphism_nonlocal} is non-local due to the projectors $\hat{P}_x$, its action on the reference state $|\Psi_x\rangle$ simplifies dramatically. Using $\hat{P}_x|\Psi_x\rangle = |\Psi_x\rangle$, we obtain:
\begin{equation}
\widetilde{\mathcal{T}}(\eta)_{\mu}|\Psi_x\rangle = \frac{2im}{\hbar}\big(\hat{\eta}_{\mu} - \langle\hat{\eta}_{\mu}\rangle_x\big)|\Psi_x\rangle. \label{eq:action_on_state}
\end{equation}
Thus, for all calculations involving expectation values on $|\Psi_x\rangle$, the SLD acts effectively as the multiplication operator $\frac{2im}{\hbar}(\hat{\eta}_{\mu} - \langle\hat{\eta}_{\mu}\rangle_x)$. This simplification is essential for computing the quantum Fisher metric and the holonomy.
\end{rem}

The quantum Fisher information metric, which coincides with the Fubini-Study metric for pure states, is defined as \cite{braunstein1994}:
\begin{equation}
g_{\mu\nu}^{FS}(x) = \frac{1}{2}\langle \Psi_x | \{L_{\mu}(x), L_{\nu}(x)\} | \Psi_x \rangle_0.
\end{equation}
Using the simplified action \eqref{eq:action_on_state}, we compute:
\begin{align}
\langle \Psi_x | \widetilde{\mathcal{T}}_{\mu} \widetilde{\mathcal{T}}_{\nu} | \Psi_x \rangle 
&= \frac{4m^2}{\hbar^2} \langle (\hat{\eta}_{\mu} - \langle\hat{\eta}_{\mu}\rangle_x)(\hat{\eta}_{\nu} - \langle\hat{\eta}_{\nu}\rangle_x) \rangle_x.
\end{align}
Since the Fisher metric is the real part of this expectation, and writing $\hat{\eta}_{\mu} = \hat{\pi}_{\mu} - i\hat{u}_{\mu}$, $\hat{\bar{\eta}}_{\mu} = \hat{\pi}_{\mu} + i\hat{u}_{\mu}$, we expand:
\begin{align}
(\hat{\eta}_{\mu} - \langle\hat{\eta}_{\mu}\rangle)(\hat{\bar{\eta}}_{\nu} - \langle\hat{\bar{\eta}}_{\nu}\rangle) 
&= (\Delta\hat{\pi}_{\mu} - i\Delta\hat{u}_{\mu})(\Delta\hat{\pi}_{\nu} + i\Delta\hat{u}_{\nu}) \nonumber \\
&= \Delta\hat{\pi}_{\mu}\Delta\hat{\pi}_{\nu} + \Delta\hat{u}_{\mu}\Delta\hat{u}_{\nu} + i(\Delta\hat{\pi}_{\mu}\Delta\hat{u}_{\nu} - \Delta\hat{u}_{\mu}\Delta\hat{\pi}_{\nu}),
\end{align}
where $\Delta\hat{A} = \hat{A} - \langle\hat{A}\rangle_x$. Taking the real part and the expectation value, the antisymmetric imaginary term vanishes, yielding the manifestly positive-definite result:
\begin{equation}
g_{\mu\nu}^{FS}(x) = \frac{4m^2}{\hbar^2}\Big[\mathrm{Cov}_x(\pi_{\mu}, \pi_{\nu}) + \mathrm{Cov}_x(u_{\mu}, u_{\nu})\Big], \label{eq:fisher_final}
\end{equation}
where $\mathrm{Cov}_x(A,B) = \langle AB \rangle_x - \langle A \rangle_x \langle B \rangle_x$. This is a fundamental result: both the geodesic velocity $\pi_{\mu}$ and the stochastic velocity $u_{\mu}$ contribute positively and independently to the quantum Fisher metric. The appearance of $\mathrm{Cov}(\pi_{\mu}, \pi_{\nu})$ is a genuine quantum enhancement absent in classical Fisher information, which would only contain the covariance of the score function $\partial_{\mu}\ln\mathcal{P} \propto u_{\mu}$ \cite{amari_2016,frieden_1998}. The non-locality of the SLD \eqref{eq:isomorphism_nonlocal} is precisely what captures this full quantum information.

By the Feldman-Hájek theorem \cite{feldman_1958,bogachev1998}, both the isomorphism $\widetilde{\mathcal{T}}$ and the Fisher metric \eqref{eq:fisher_final} are independent of the reference Gaussian measure $\nu_0$, making the entire construction intrinsic to the physical probability density $\mathcal{P}$. Projecting the classical Fisher metric on $\mathcal{C}$ \cite{amari_2016} onto spacetime using field gradients \cite{caticha_2004} yields the induced metric:
\begin{equation}
H_{\mu\nu}(x) = \int \mathcal{D}\Phi \, \mathcal{P}[\Phi] \, \nabla_{\mu}\ln\mathcal{P}[\Phi] \, \nabla_{\nu}\ln\mathcal{P}[\Phi] = \frac{4m^2}{\hbar^2}\langle u_{\mu}u_{\nu}\rangle_{\mathcal{P}},
\end{equation}
identifying $u_{\mu}$ as the information-geometric potential. This connects our framework to broader investigations in quantum information geometry \cite{fujiwara_2015,fujiwara_2019,francoise_tarama_2021,tokeshi_2020}.

\subsection{Flat $U(1)$ Connection and Complex Geodesic Equation}

\begin{prop}[Flat Connection]
The 1-form $A = \eta_{\mu}dx^{\mu}$ defines a $U(1)$-connection with covariant derivative
\begin{equation}
D_{\mu} := \nabla_{\mu} - i\frac{m}{\hbar}\eta_{\mu}. \label{eq:covariant}
\end{equation}
This connection is flat and $\mathcal{K}$ is its horizontal section.
\end{prop}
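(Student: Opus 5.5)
We treat $\Phi$ as fixed throughout and verify the three assertions in the order they are stated. By the bundle-structure proposition, for fixed $\Phi$ the quantity $\eta_{\mu}[\Phi]$ is an ordinary complex covector field on $M$, so $A=\eta_{\mu}dx^{\mu}$ is a smooth complex-valued 1-form. Acting on sections $\psi$ of the trivial complex line bundle $\mathcal{E}\times\mathbb{C}$ pulled back to $M$, the operator $D_{\mu}=\nabla_{\mu}-i\frac{m}{\hbar}\eta_{\mu}$ is $\mathbb{C}$-linear. It also satisfies the Leibniz rule $D_{\mu}(f\psi)=(\partial_{\mu}f)\psi+fD_{\mu}\psi$ for functions $f$, since the $\eta$ term is $C^\infty$-linear. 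Hence $D$ is a connection on a line bundle, with connection form $-i\frac{m}{\hbar}A$.

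The $U(1)$ label needs a short remark, because $\eta_{\mu}=\pi_{\mu}-iu_{\mu}$ is not real. We decompose $-i\frac{m}{\hbar}A=-i\frac{m}{\hbar}\pi_{\mu}dx^{\mu}-\frac{m}{\hbar}u_{\mu}dx^{\mu}$. The first term is $\mathfrak{u}(1)$-valued and generates phase rotations $e^{i\mathcal{S}/\hbar}$. The second is real, so it is properly a $GL(1,\mathbb{C})\cong\mathbb{C}^{*}$ piece. Using $u_{\mu}=\frac{\hbar}{2m}\nabla_{\mu}\ln\mathcal{P}$, this second term equals $-\frac12 d\ln\mathcal{P}$, which is pure gauge. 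Conjugating by the positive function $\sqrt{\mathcal{P}}$, i.e. passing to $\tilde\psi=\psi/\sqrt{\mathcal{P}}$, removes it entirely and leaves the genuinely $U(1)$ connection $\nabla-i\frac{m}{\hbar}\pi$. We will present the statement in this sense: $D$ is gauge-equivalent under a real rescaling to a $U(1)$ connection. This matches the gauge freedom $\eta_{\mu}\sim\eta_{\mu}+ic_{\mu}$ noted earlier in the paper.

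Next comes flatness. The curvature is $F_{\mu\nu}=[D_{\mu},D_{\nu}]$ acting on scalar sections. On scalars the Levi-Civita part commutes, since the connection is torsion-free, so
\begin{equation}
F_{\mu\nu}=-i\frac{m}{\hbar}\big(\nabla_{\mu}\eta_{\nu}-\nabla_{\nu}\eta_{\mu}\big).
\end{equation}
By definition $\eta_{\mu}=-i\frac{\hbar}{m}\nabla_{\mu}\ln\mathcal{K}$. Substituting gives $F_{\mu\nu}=-\big(\nabla_{\mu}\nabla_{\nu}-\nabla_{\nu}\nabla_{\mu}\big)\ln\mathcal{K}=0$, again by symmetry of second covariant derivatives of a scalar. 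Equivalently, we can check the real and imaginary parts separately: $\pi=d\mathcal{S}/m$ and $u=\frac{\hbar}{2m}d\ln\mathcal{P}$ are each exact, so $dA=0$.

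Finally, horizontality is a one-line computation:
\begin{equation}
D_{\mu}\mathcal{K}=\nabla_{\mu}\mathcal{K}-i\frac{m}{\hbar}\Big(-i\frac{\hbar}{m}\frac{\nabla_{\mu}\mathcal{K}}{\mathcal{K}}\Big)\mathcal{K}=\nabla_{\mu}\mathcal{K}-\nabla_{\mu}\mathcal{K}=0,
\end{equation}
which uses the non-vanishing assumption on $\mathcal{K}$.

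The main obstacle is the exactness step, $d\ln\mathcal{K}$ being closed. This step is only local, because $\ln\mathcal{K}$ is defined only up to $2\pi i\mathbb{Z}$, and more generally $\mathcal{S}$ may be multi-valued. We will handle it by working on simply connected open sets, or equivalently on the universal cover. There $\mathcal{K}\neq0$ guarantees a smooth local branch of $\ln\mathcal{K}$. Flatness is a local, branch-independent statement, since $d\ln\mathcal{K}=d\mathcal{K}/\mathcal{K}$ is single-valued, so $F=0$ holds globally. We will explicitly flag that nontrivial holonomy may nevertheless survive, which sets up the later holonomy quantization.
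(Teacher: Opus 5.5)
Your proof is correct and follows the paper's route: the curvature $-i\frac{m}{\hbar}(\nabla_\mu\eta_\nu-\nabla_\nu\eta_\mu)$ vanishes because $\eta$ is a logarithmic gradient of $\mathcal{K}$, and $D_\mu\mathcal{K}=0$ by direct substitution. Two of your refinements are sound and absent from the paper's one-line proof, which writes $A=d\phi$ globally and never justifies the $U(1)$ label: you get closedness from the single-valued $d\mathcal{K}/\mathcal{K}$ rather than from global exactness, and you justify $U(1)$ by the global rescaling $\psi\mapsto\psi/\sqrt{\mathcal{P}}$ that removes $-\tfrac12 d\ln\mathcal{P}$. One small caveat concerns your closing remark: for single-valued nonvanishing $\mathcal{K}$, the group holonomy $\exp\oint d\mathcal{K}/\mathcal{K}$ of $D$ is trivial. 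What can survive is only a nonzero period $\frac{m}{\hbar}\oint\eta\in 2\pi\mathbb{Z}$, not a genuinely nontrivial holonomy.
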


\begin{proof}
Writing $\eta_{\mu} = \nabla_{\mu}\phi$ with $\phi = -i\frac{\hbar}{m}\ln\mathcal{K}$, we have $A = d\phi$, so the curvature $F = dA = d^2\phi = 0$. Hence $[D_{\mu}, D_{\nu}] = 0$. By construction, $D_{\mu}\mathcal{K} = 0$.
\end{proof}

The flatness is the geometric origin of the unified dynamics. From $[D_{\mu}, D_{\nu}] = 0$, we have $\nabla_{\nu}\eta_{\mu} = \nabla_{\mu}\eta_{\nu}$. Computing the convective derivative:
\begin{align}
\eta^{\nu}\nabla_{\nu}\eta_{\mu} = \eta^{\nu}\nabla_{\mu}\eta_{\nu} = \nabla_{\mu}\left(\frac{1}{2}\eta^{\nu}\eta_{\nu}\right) - \frac{1}{2}(\nabla_{\mu}\eta^{\nu})\eta_{\nu} + \frac{1}{2}\eta^{\nu}(\nabla_{\mu}\eta_{\nu}).
\end{align}
The last two terms cancel identically, yielding:

\begin{theorem}[Complex Geodesic Equation]
The complex velocity satisfies
\begin{equation}
\eta^{\nu}\nabla_{\nu}\eta_{\mu} = \nabla_{\mu}\left(\frac{1}{2}\eta^{\nu}\eta_{\nu}\right). \label{eq:complex_geodesic}
\end{equation}
Equivalently, using the Lie derivative: $\mathcal{L}_{\eta}\eta = d(|\eta|^2)$ with $|\eta|^2 \equiv \eta^{\nu}\eta_{\nu}$.
\end{theorem}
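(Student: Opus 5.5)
The plan is to derive everything from the single local fact that $\eta$ is closed, $d\eta = 0$, which the Flat Connection proposition gives us. All index operations will be extended complex-bilinearly, with no complex conjugation, so that $\eta^{\nu}\eta_{\nu} = g^{\nu\rho}\eta_{\nu}\eta_{\rho}$ is a holomorphic quadratic expression in $\eta$. This matches how $|\eta|^2$ is used in the statement.

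\textbf{Step 1 (symmetry of $\nabla\eta$).} By the Flat Connection proposition, on any simply connected open set we may write $\eta_{\mu} = \nabla_{\mu}\phi$ with $\phi = -i\frac{\hbar}{m}\ln\mathcal{K}$. Different branches of the logarithm differ by additive constants, so $\eta_{\mu}$ is single valued and $F = d\eta = 0$ holds globally. Since the spacetime connection $\nabla$ is the Levi-Civita connection, it is torsion free, and so
\[
\nabla_{\mu}\eta_{\nu} - \nabla_{\nu}\eta_{\mu} = \partial_{\mu}\eta_{\nu} - \partial_{\nu}\eta_{\mu} = (d\eta)_{\mu\nu} = 0.
\]
Equivalently, in the local chart, $\nabla_{\mu}\nabla_{\nu}\phi$ is symmetric because the Christoffel symbols are symmetric. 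This step is where flatness enters, and it is the only place.

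\textbf{Step 2 (convective derivative).} Using Step 1, write $\eta^{\nu}\nabla_{\nu}\eta_{\mu} = \eta^{\nu}\nabla_{\mu}\eta_{\nu}$. Next use metric compatibility, $\nabla g = 0$, so that raising indices commutes with $\nabla_{\mu}$. Then the Leibniz rule gives
\[
\nabla_{\mu}\!\left(\tfrac12 g^{\nu\rho}\eta_{\nu}\eta_{\rho}\right) = \tfrac12(\nabla_{\mu}\eta^{\nu})\eta_{\nu} + \tfrac12\eta^{\nu}\nabla_{\mu}\eta_{\nu} = \eta^{\nu}\nabla_{\mu}\eta_{\nu}.
\]
The two terms are equal because $(\nabla_{\mu}\eta^{\nu})\eta_{\nu} = g^{\nu\rho}(\nabla_{\mu}\eta_{\rho})\eta_{\nu} = \eta^{\rho}\nabla_{\mu}\eta_{\rho}$. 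This is exactly the cancellation displayed before the theorem, and it yields \eqref{eq:complex_geodesic}.

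\textbf{Step 3 (Lie derivative form).} Regard $\eta$ as a complex 1-form, and $\eta^{\sharp}$ as the complex vector field obtained by raising its index with $g$. Apply Cartan's formula $\mathcal{L}_{\eta^\sharp}\eta = \iota_{\eta^\sharp}d\eta + d(\iota_{\eta^\sharp}\eta)$. Since $d\eta = 0$ and $\iota_{\eta^\sharp}\eta = \eta^{\nu}\eta_{\nu}$, this gives $\mathcal{L}_{\eta}\eta = d(|\eta|^2)$. As a consistency check, in components
\[
(\mathcal{L}_{\eta}\eta)_{\mu} = \eta^{\nu}\nabla_{\nu}\eta_{\mu} + \eta_{\nu}\nabla_{\mu}\eta^{\nu},
\]
and by Step 2 each term equals $\nabla_{\mu}(\tfrac12|\eta|^2)$. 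So the factor $\tfrac12$ in \eqref{eq:complex_geodesic} and its absence in the Lie derivative form are compatible.

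\textbf{Main obstacle.} The only delicate point is justifying $d\eta = 0$ globally when $\ln\mathcal{K}$ is multi-valued. I would handle it as in Step 1: $\mathcal{K}$ is assumed non-vanishing and smooth, so $\nabla\mathcal{K}/\mathcal{K}$ is a globally defined smooth 1-form, and it is locally exact. The curvature identity is pointwise, so local exactness suffices, even though the holonomy of $\eta$ around non-contractible loops may be nonzero. The rest is routine tensor calculus. It holds for each fixed $\Phi$, so the identity holds fibrewise for the section $\eta \in \Gamma(E)$.
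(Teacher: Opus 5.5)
Your proof is correct and follows the paper's own route. You obtain $\nabla_\mu\eta_\nu=\nabla_\nu\eta_\mu$ from flatness, where you rightly make the torsion-free property explicit, and then use the Leibniz/metric-compatibility cancellation in $\nabla_\mu(\tfrac12\eta^\nu\eta_\nu)$. Your Step 3 via Cartan's formula with $d\eta=0$ also supplies a justification of the Lie-derivative form $\mathcal{L}_\eta\eta=d(|\eta|^2)$, which the paper only asserts as "equivalent"; your complex-bilinear convention agrees with the paper's $|\eta|^2\equiv\eta^\nu\eta_\nu$.
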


Decomposing $\eta_{\mu} = \pi_{\mu} - i u_{\mu}$, the real part reproduces the modified Hamilton-Jacobi equation, and the imaginary part gives the potential flow equation for $u_{\mu}$ \cite{Escobar-Aguilar:2023ekv}.

\section{Physical Consequences}

\subsection{Continuity Equation and Born Rule}

The amplitude $\mathcal{K}$ satisfies a modified Klein-Gordon equation ($\mathcal{A}=1/2+\mathcal{V}$ in \cite{Escobar-Aguilar:2023ekv}):
\begin{equation}
\left(\nabla_{\mu}\nabla^{\mu} - \frac{m^2}{\hbar^2} + \mathcal{V}\right)\mathcal{K} = 0,
\end{equation}
with a real effective potential $\mathcal{V}$. Computing $\mathcal{K}^*\nabla^{\mu}\nabla_{\mu}\mathcal{K} - \mathcal{K}\nabla^{\mu}\nabla_{\mu}\mathcal{K}^* = 0$, and using the identity
\begin{equation}
\mathcal{K}^* \nabla_{\mu}\nabla^{\mu} \mathcal{K} - \mathcal{K} \nabla_{\mu}\nabla^{\mu} \mathcal{K}^* = \nabla_{\mu}\left(\mathcal{K}^* \nabla^{\mu} \mathcal{K} - \mathcal{K} \nabla^{\mu} \mathcal{K}^*\right),
\end{equation}
we substitute the polar decomposition $\mathcal{K} = \sqrt{\mathcal{P}}e^{i\mathcal{S}/\hbar}$. The current is:
\begin{align}
\mathcal{K}^* \nabla^{\mu} \mathcal{K} - \mathcal{K} \nabla^{\mu} \mathcal{K}^*
&= \mathcal{P}\left[\left(\frac{1}{2}\nabla^{\mu}\ln\mathcal{P} + \frac{i}{\hbar}\nabla^{\mu}\mathcal{S}\right) - \left(\frac{1}{2}\nabla^{\mu}\ln\mathcal{P} - \frac{i}{\hbar}\nabla^{\mu}\mathcal{S}\right)\right] \nonumber \\
&= \frac{2i}{\hbar}\mathcal{P}\nabla^{\mu}\mathcal{S} = \frac{2im}{\hbar}\mathcal{P}\pi^{\mu}.
\end{align}
The divergence-free condition then yields the standard continuity equation:
\begin{equation}
\nabla_{\mu}\left(\mathcal{P}\pi^{\mu}\right) = 0. \label{eq:continuity}
\end{equation}
Expanding the derivative using $\nabla_{\mu}\mathcal{P} = \frac{2m}{\hbar}\mathcal{P}u_{\mu}$, we obtain the fundamental relation coupling the divergence of the classical velocity to the projection of the two velocities:
\begin{equation}
\nabla_{\mu}\pi^{\mu} = -\frac{2m}{\hbar}u_{\mu}\pi^{\mu}. \label{eq:div_pi}
\end{equation}
This equation is a direct consequence of the continuity equation \eqref{eq:continuity} and the definition $u_{\mu} = \frac{\hbar}{2m}\nabla_{\mu}\ln\mathcal{P}$. It couples the deterministic and stochastic sectors without imposing orthogonality $u_{\mu}\pi^{\mu}=0$ a priori; that condition would follow from additional assumptions such as incompressibility of the flow ($\nabla_{\mu}\pi^{\mu}=0$).

The Born rule emerges from the cumulant expansion \eqref{eq:cumulantP}: $\mathcal{P} = \exp(-\langle S_1^2\rangle_h/\hbar^2)$ is the effective probability density induced by Gaussian metric fluctuations. The squaring of the amplitude $|\mathcal{K}|^2 = \mathcal{P}$ follows from the stochastic nature of the background, not as an additional postulate. The product $u_{\mu}\pi^{\mu}$ appearing in \eqref{eq:div_pi} has a direct information-geometric interpretation through the quantum Fisher metric \eqref{eq:fisher_final}: while $\mathrm{Cov}(u_{\mu}, u_{\nu})$ measures the distinguishability of nearby probability distributions, the coupling $\mathrm{Cov}(\pi_{\mu}, \pi_{\nu})$ captures the quantum coherence encoded in the phase.

\subsection{Geometrization vs. Hidden Variables}

This framework is a geometrization of quantum mechanics, fundamentally distinct from hidden-variable theories such as the de Broglie-Bohm interpretation \cite{bohm_1952,holland_1993}. The key differences are:

\begin{enumerate}
    \item \textbf{Origin of stochasticity:} In hidden-variable theories, $u_{\mu}$ arises from an assumed initial distribution of particle positions. Here, it derives from the variance of spacetime fluctuations $\langle S_1^2\rangle_h$, a physical field with a well-defined gravitational origin.
    
    \item \textbf{Non-locality:} The complex velocity $\eta_{\mu}$ is a functional on the entire configuration space $\mathcal{C}$, geometrically encoded as a section of the pullback bundle $\pi_2^*(T^*M)$ over $\mathcal{C}\times M$. This preserves quantum non-locality without invoking supplementary particle degrees of freedom.
    
    \item \textbf{Quantum potential:} The Bohmian quantum potential $Q \propto \nabla^2\sqrt{\rho}/\sqrt{\rho}$ is replaced by the geometric structure of the flat $U(1)$ connection. The flatness $[D_{\mu}, D_{\nu}] = 0$ encodes the absence of a classical force, while the connection itself carries the quantum information through the Fisher metric.
    
    \item \textbf{Emergent Born rule:} The probability density is not postulated as an initial condition but emerges from the cumulant expansion as $\mathcal{P} = \exp(-\langle S_1^2\rangle_h/\hbar^2)$. The wave function $\Psi = \sqrt{\mathcal{P}}e^{i\mathcal{S}/\hbar}$ retains its fundamental status.
    
    \item \textbf{No additional degrees of freedom:} There are no hidden particles, trajectories, or supplementary parameters. The quantum behavior is entirely encoded in the geometry of the complex velocity and the stochastic gravitational background.
\end{enumerate}

\subsection{Von Neumann Entropy}

The density operator is $\hat{\rho} = Z^{-1}\int \mathcal{D}[\Phi] \mathcal{K}[\Phi] |\Phi\rangle\langle\Phi|$ \cite{nielsen_chuang_2010,vedral_2002}. In the Gaussian approximation \cite{bennett_1996}, the von Neumann entropy $S_{\text{vN}} = -\operatorname{Tr}(\hat{\rho}\ln\hat{\rho})$ becomes:
\begin{equation}
S_{\text{vN}} \approx \frac{1}{2}\operatorname{Tr} \ln\left( \frac{\hbar^2}{4m^2} \langle \eta \eta^* + \eta^* \eta \rangle \right) + \text{const.}
\end{equation}
This establishes $\eta_{\mu}$ as the fundamental carrier of quantum statistical uncertainty, linking the $U(1)$ geometry to information theory \cite{fujiwara_imai_2003,chirco_2017,obregon_2015}. The connection between information geometry, entropy, and gravity has been explored in various contexts \cite{fujiwara_imai_2003,chirco_2017,obregon_2015}.

\subsection{Effective Einstein Equations and Fisher Metric from the Path Integral}

The full partition function $Z[g]$ is a functional of the background metric. Varying the effective action $\Gamma[g] = -i\hbar\ln Z[g]$ yields the effective Einstein equations \cite{wald_1984}:
\begin{equation}
\frac{\delta\Gamma[g]}{\delta g^{\mu\nu}(x)} = 0 \quad \Longrightarrow \quad G_{\mu\nu}(x) = 8\pi G \, \langle T_{\mu\nu}(x) \rangle_{\text{eff}}.
\end{equation}
Using the definition of the matter amplitude \eqref{eq:defK}, the effective energy-momentum tensor is:
\begin{equation}
\langle T_{\mu\nu}(x) \rangle_{\text{eff}} = \frac{1}{Z} \int \mathcal{D}[\Phi] \int \mathcal{D}[h] P[h] \, T_{\mu\nu}^{\text{(clas)}}(x) \, e^{\frac{i}{\hbar}S[\Phi, A; g+h]}.
\end{equation}
This is the average of the classical energy-momentum tensor over both matter configurations and gravitational fluctuations. Expressing this in terms of the polar decomposition $\mathcal{K} = \sqrt{\mathcal{P}}e^{i\mathcal{S}/\hbar}$, the conditional average over metric fluctuations for a fixed matter configuration can be written using functional derivatives of $\mathcal{S}$ and $\mathcal{P}$ with respect to the metric.

The connection to information geometry arises through the stochastic velocity $u_{\mu}$. From the cumulant expansion \eqref{eq:cumulantP}, $\ln\mathcal{P} = -\frac{1}{\hbar^2}\langle S_1^2\rangle_h + \cdots$, so functional derivatives of $\ln\mathcal{P}$ probe the variance of the matter-gravity coupling. The induced Fisher metric on spacetime, obtained by projecting the classical Fisher information metric on $\mathcal{C}$ \cite{amari_2016,caticha_2004}, is:
\begin{equation}
H_{\mu\nu}(x) = \int \mathcal{D}\Phi \, \mathcal{P}[\Phi] \, \nabla_{\mu}\ln\mathcal{P}[\Phi] \, \nabla_{\nu}\ln\mathcal{P}[\Phi] = \frac{4m^2}{\hbar^2} \langle u_{\mu} u_{\nu} \rangle_{\mathcal{P}}.
\end{equation}
Thus, the effective energy-momentum tensor couples to the Fisher metric through their mutual dependence on the variance of gravitational fluctuations. The backreaction of quantum matter on spacetime geometry is encoded in the information-geometric structure of the complex velocity \cite{chirco_2017,obregon_2015}.

\section{Holonomy Quantization and a Toy Model}

\subsection{The Logical Tension and Its Resolution}

The flatness condition $[D_{\mu}, D_{\nu}] = 0$ implies the connection is locally pure gauge, but the complex potential $\phi = -i\frac{\hbar}{m}\ln\mathcal{K}$ can be multi-valued on non-simply connected spacetimes \cite{nakahara_2003,frankel_2011}. The multi-valuedness has three independent origins:

\begin{enumerate}
    \item \textbf{Classical topological terms:} The action $S_0$ may contain topological invariants (e.g., winding number, Chern-Simons terms) that make $e^{iS_0/\hbar}$ single-valued but $S_0/\hbar$ multi-valued modulo $2\pi$.
    \item \textbf{Effective gravitational corrections:} The cumulant expansion gives $\mathcal{S} = S_0 + \langle S_2 \rangle_h + \cdots$. The term $\langle S_2 \rangle_h$ can generate effective topological contributions through backreaction of metric fluctuations.
    \item \textbf{Phase singularities of $\mathcal{P}$:} If $\mathcal{P}[\Phi](x) = 0$ at isolated points, $\ln\mathcal{P}$ develops branch cuts. While $\mathcal{P}$ itself remains single-valued, the analytic structure of $\mathcal{K} = \sqrt{\mathcal{P}}e^{i\mathcal{S}/\hbar}$ is affected \cite{berry_1984}.
\end{enumerate}

The resolution of the apparent tension comes from the single-valuedness of the full matter amplitude $\mathcal{K}$. Writing the logarithmic derivative:
\begin{equation}
d\ln\mathcal{K} = \frac{1}{2}d\ln\mathcal{P} + \frac{i}{\hbar}d\mathcal{S}.
\end{equation}
Since $\mathcal{P} > 0$ is strictly positive and single-valued by Assumption 2.2, the integral of $d\ln\mathcal{P}$ around any closed loop vanishes:
\begin{equation}
\oint_{\gamma} d\ln\mathcal{P} = 0. \label{eq:integral_P_zero}
\end{equation}
Therefore, only the phase contributes to the loop integral:
\begin{equation}
\oint_{\gamma} d\ln\mathcal{K} = \frac{i}{\hbar}\oint_{\gamma} d\mathcal{S}.
\end{equation}

Single-valuedness of $\mathcal{K}$ requires that it return to its original value after parallel transport around a closed loop, which means:
\begin{equation}
\oint_{\gamma} d\ln\mathcal{K} = 2\pi i N, \qquad N \in \mathbb{C}.
\end{equation}
The integer $N$ is complex in general, $N = n + i m$ with $n, m \in \mathbb{Z}$. However, the condition \eqref{eq:integral_P_zero} forces the real part of the loop integral to vanish, which implies $\operatorname{Re}(\oint_{\gamma} d\ln\mathcal{K}) = \frac{1}{2}\oint_{\gamma} d\ln\mathcal{P} = 0$. This is automatically satisfied by \eqref{eq:integral_P_zero}, so $m = 0$ and $N = n \in \mathbb{Z}$.

The action functional $\mathcal{S}$ may contain contributions beyond the purely classical term. Collecting all contributions, we write:
\begin{equation}
\oint_{\gamma} d\mathcal{S} = 2\pi\hbar n + \hbar\Delta\phi_{\text{top}},
\end{equation}
where $\Delta\phi_{\text{top}} \in \mathbb{R}$ absorbs all non-integer offsets from classical topological terms, gravitational corrections, and phase singularities.

Using the definition $\eta_{\mu} = -i\frac{\hbar}{m}\nabla_{\mu}\ln\mathcal{K}$, the holonomy is:
\begin{equation}
\frac{m}{\hbar}\oint_{\gamma} \eta_{\mu} dx^{\mu} = -i\oint_{\gamma} d\ln\mathcal{K} = -i\left(\frac{i}{\hbar}\oint_{\gamma} d\mathcal{S}\right) = \frac{1}{\hbar}\oint_{\gamma} d\mathcal{S} = 2\pi n + \Delta\phi_{\text{top}}.
\end{equation}

\begin{theorem}[Holonomy Quantization with Offset]
For a non-contractible loop $\gamma$, the holonomy of the complex velocity satisfies
\begin{equation}
\frac{m}{\hbar}\oint_{\gamma} \eta_{\mu} dx^{\mu} = 2\pi n + \Delta\phi_{\text{top}}, \label{eq:holonomy_final}
\end{equation}
where $n \in \mathbb{Z}$ and $\Delta\phi_{\text{top}}$ is a theory-dependent real constant determined by:
\begin{itemize}
    \item \textbf{Classical topology:} $\Delta\phi_{\text{top}} = \theta$ for a $\theta$-angle in the classical action.
    \item \textbf{Spacetime defects:} $\Delta\phi_{\text{top}} = 2\pi\ell(1/\alpha - 1)$ for a conical spacetime with deficit angle $\alpha$, as demonstrated in the toy model below.
    \item \textbf{Gravitational corrections:} Additional contributions from $\langle S_2 \rangle_h$ that modify the effective offset.
    \item \textbf{Phase singularities:} Quantized by Berry's analysis \cite{berry_1984} when $\mathcal{P}$ has zeros.
\end{itemize}
\end{theorem}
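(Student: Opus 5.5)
The argument follows the derivation that immediately precedes the statement, organized into three steps. First, on a non-simply-connected region of $M$ (for instance, the complement of a defect), we work with a fixed configuration $\Phi_0$. By Proposition (Flat Connection), $A=\eta_\mu dx^\mu$ is closed, since $dA=0$. Its loop integral therefore depends only on the homotopy class of $\gamma$, which is why the statement is phrased for non-contractible loops; for contractible loops it vanishes by Stokes' theorem. We then write $\frac{m}{\hbar}A=-i\,d\ln\mathcal{K}$ and use the polar decomposition to split this into $-\frac{i}{2}d\ln\mathcal{P}+\frac{1}{\hbar}d\mathcal{S}$.

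Second, we treat the two pieces separately. Because $\mathcal{P}>0$ is a single-valued smooth function on the region, $d\ln\mathcal{P}$ is exact, and its loop integral vanishes; this is \eqref{eq:integral_P_zero}. The imaginary part of the holonomy is therefore zero, so the holonomy is real. For the phase we use the fact that $\mathcal{K}$, and hence $e^{i\mathcal{S}/\hbar}$, is single-valued along $\gamma$. Analytic continuation of $\mathcal{S}$ around $\gamma$ can then change it only by $2\pi\hbar n$. Any additional real offset must come from contributions to $\mathcal{S}$ that are not globally defined functions, namely topological terms, the $\langle S_2\rangle_h$ corrections, or singular points of the phase. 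We collect these into $\hbar\Delta\phi_{\text{top}}$. Combining the two pieces yields \eqref{eq:holonomy_final}.

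Third, we identify the offsets listed in the statement. For a $\theta$-term, the action shifts by $\hbar\theta$ per winding, which gives $\Delta\phi_{\text{top}}=\theta$ directly. For the cone, we write the metric $ds^2=-dt^2+dr^2+\alpha^2r^2d\varphi^2$ and take a mode $e^{i\ell\varphi/\alpha}$ adapted to the rescaled angle. Integrating $\partial_\varphi\mathcal{S}/\hbar$ over $\varphi\in[0,2\pi)$ gives $2\pi\ell/\alpha=2\pi\ell+2\pi\ell(1/\alpha-1)$, which reproduces the quoted offset. The gravitational and Berry contributions enter only through the additive structure of $\mathcal{S}$ in \eqref{eq:cumulantS}, and around a zero of $\mathcal{K}$ the standard winding-number argument of \cite{berry_1984} applies.

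The main obstacle is the logical status of $\Delta\phi_{\text{top}}$. If $\mathcal{K}$ is genuinely single-valued, then $\oint d\ln\mathcal{K}\in 2\pi i\mathbb{Z}$ and no offset is possible. A nonzero offset therefore requires allowing $\mathcal{K}$ to be single-valued only up to a fixed phase, that is, to be a section of a flat line bundle with nontrivial monodromy. Equivalently, one must treat the defect or the topological term as an external flat background connection. I would make this precise first by stating the monodromy assumption explicitly. The theorem then reduces to the decomposition of the holonomy into the integer winding of the horizontal section plus the background monodromy $\Delta\phi_{\text{top}}$.
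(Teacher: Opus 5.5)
Your Steps 1--2 are the paper's own derivation. You split $d\ln\mathcal{K}=\tfrac12 d\ln\mathcal{P}+\tfrac{i}{\hbar}d\mathcal{S}$, drop $\oint_\gamma d\ln\mathcal{P}$, get an integer from single-valuedness, and then collect ``all contributions'' into $\Delta\phi_{\text{top}}$. Your last paragraph correctly locates the flaw that the paper's proof shares. The paper first concludes $\oint_\gamma d\ln\mathcal{K}=2\pi i n$ and then writes $\oint_\gamma d\mathcal{S}=2\pi\hbar n+\hbar\Delta\phi_{\text{top}}$. These two lines together force $\Delta\phi_{\text{top}}\in 2\pi\mathbb{Z}$. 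Without an independent definition of $\Delta\phi_{\text{top}}$ the displayed identity is empty, because every real number is $2\pi n+\Delta\phi$ for some $n$.

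Your monodromy hypothesis is the right repair, and it gives the theorem real content: the holonomy is $\equiv$ the prescribed monodromy mod $2\pi$. But you should then:
\begin{itemize}
\item state the hypothesis at the outset, and delete the Step~2 sentence attributing the offset to non-globally-defined pieces of $\mathcal{S}$, since it contradicts the sentence before it;
\item redefine $\eta$ to include the background connection in your ``equivalent'' version, because with $\eta=-i\frac{\hbar}{m}d\ln\mathcal{K}$ and $\mathcal{K}$ single-valued the offset is zero again;
\item fix a convention for the split, since $\Delta\phi_{\text{top}}$ is now only defined mod $2\pi$ (the paper takes $n=\ell$ and then calls $n$ the nearest integer to $\ell/\alpha$).
\end{itemize}
Of the four bullets, only the $\theta$-angle survives as stated, read as an Aharonov--Bohm-type prescribed monodromy. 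Zeros of $\mathcal{K}$ carry integer winding and so feed $n$, not $\Delta\phi_{\text{top}}$. Neither you nor the paper shows that $\langle S_2\rangle_h$, an average of a local functional of single-valued fields, can produce nontrivial monodromy.

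The concrete gap is Step~3, inherited from the paper's toy model. The mode $e^{i\ell\varphi/\alpha}$ with $\varphi\in[0,2\pi)$ is not single-valued: it returns multiplied by $e^{2\pi i\ell/\alpha}$. So the offset $2\pi\ell(1/\alpha-1)$ is exactly the twisted boundary condition you imposed, not a consequence of the deficit angle.

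For an ordinary scalar on the static cone $-dt^2+dr^2+\alpha^2r^2d\varphi^2+dz^2$, single-valuedness forces $\Phi\propto e^{i\ell\varphi}$ with $\ell\in\mathbb{Z}$. The factor $1/\alpha$ then appears only in the Bessel order $|\ell|/\alpha$ of the radial equation, and the phase integrates to $2\pi\ell$. Equivalently, in the locally Euclidean angle $\phi=\alpha\varphi\in[0,2\pi\alpha)$ the single-valued modes are $e^{i\ell\phi/\alpha}$, and $\oint d(\ell\phi/\alpha)=2\pi\ell$. The formula $e^{i\ell\varphi/\alpha}$ conflates the two angles.

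So the ``spacetime defects'' bullet is not established by this computation. You would need a genuine source of monodromy: an enclosed flux, the time shift of a spinning string, or a field with physically justified twisted boundary conditions. Also, $\mathcal{S}_\ell=S_0[\Phi_\ell]$ is a number, so $\partial_\varphi\mathcal{S}_\ell$ only makes sense after silently replacing $\mathcal{S}$ by the local phase of the mode, as both you and the paper do.
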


\begin{rem}[Comparison with the SLD Holonomy]
The bundle isomorphism $\widetilde{\mathcal{T}}$ established in \cite{mezadominguez2026bundle} respects this holonomy structure. Using the action on the reference state \eqref{eq:action_on_state}, the SLD inherits the same topological phase:
\begin{equation}
\frac{m}{\hbar}\oint_{\gamma} \langle \widetilde{\mathcal{T}}(\eta)_{\mu} \rangle_x dx^{\mu} = 2\pi n + \Delta\phi_{\text{top}}.
\end{equation}
This provides an operational interpretation: the topological phase can be measured, in principle, through quantum estimation of spacetime parameters saturating the Cramér-Rao bound.
\end{rem}

\subsection{Toy Model: Scalar Field on a Conical Spacetime}

We now demonstrate the entire machinery with an explicit analytical calculation.

\subsubsection{Geometry and Classical Action}

Consider a conical spacetime metric with a topological defect along the $z$-axis:
\begin{equation}
ds^2 = -dt^2 + dr^2 + \alpha^2 r^2 d\theta^2 + dz^2, \label{eq:conical}
\end{equation}
where $0 < \alpha < 1$ is the deficit angle parameter ($\alpha = 1 - 4G\mu$ for a cosmic string with linear mass density $\mu$). The spatial sections are locally flat but globally non-simply connected: circles of constant $r$ have circumference $2\pi\alpha r$. This geometry is a classic example in topological physics \cite{nakahara_2003,frankel_2011}.

Consider a free, minimally coupled, massive scalar field on this background:
\begin{equation}
S_0[\Phi] = -\frac{1}{2}\int d^4x \sqrt{-g}\left(g^{\mu\nu}\partial_{\mu}\Phi\partial_{\nu}\Phi + \frac{m^2}{\hbar^2}\Phi^2\right),
\end{equation}
where $\sqrt{-g} = \alpha r$. We study a single angular mode with fixed radial and longitudinal quantum numbers:
\begin{equation}
\Phi_{\ell}(t,r,\theta,z) = \varphi_{\ell}(r) e^{-i\omega t + ik_z z + i\ell\theta/\alpha},
\end{equation}
where $\ell \in \mathbb{Z}$ ensures $2\pi$-periodicity, and $\varphi_{\ell}(r)$ is a normalized radial profile.

\subsubsection{Stochastic Metric Fluctuations}

We introduce stochastic fluctuations that couple to the angular component of the metric:
\begin{equation}
\langle h_{\mu\nu}(x) h_{\alpha\beta}(x') \rangle = \sigma^2 r^2 \frac{\delta^{(4)}(x-x')}{\sqrt{-g}} \delta_{\mu}^{\theta}\delta_{\nu}^{\theta}\delta_{\alpha}^{\theta}\delta_{\beta}^{\theta},
\end{equation}
where $\sigma$ encodes the strength of metric fluctuations. This noise model affects the $\theta\theta$ component that controls the circumference of circles around the defect.

\subsubsection{Matter Amplitude in the Gaussian Approximation}

The first-order coupling is $S_1 = \int d^4x \sqrt{-g} \, h^{\theta\theta} \, T_{\theta\theta}$, where $T_{\theta\theta} = \partial_{\theta}\Phi_{\ell}\partial_{\theta}\Phi_{\ell} - \frac{1}{2}\bar{g}_{\theta\theta}\bar{g}^{\mu\nu}\partial_{\mu}\Phi_{\ell}\partial_{\nu}\Phi_{\ell}$. With $\partial_{\theta}\Phi_{\ell} = i(\ell/\alpha)\Phi_{\ell}$, we obtain:
\begin{equation}
\langle S_1^2\rangle_h = \sigma^2 \left(\frac{\ell}{\alpha}\right)^4 D_{\ell},
\end{equation}
where $D_{\ell}$ is a positive constant from the radial, temporal, and longitudinal integrals. The cumulant expansion gives:
\begin{equation}
\mathcal{P}_{\ell} = \exp\left(-\frac{\sigma^2}{\hbar^2}\left(\frac{\ell}{\alpha}\right)^4 D_{\ell}\right), \qquad \mathcal{S}_{\ell} = S_0[\Phi_{\ell}].
\end{equation}
At this order, $\mathcal{P}_{\ell}$ is constant, so $u_{\mu} = 0$ and $\eta_{\mu} = \pi_{\mu}$.

\subsubsection{Holonomy Calculation}

The $\theta$-component of the geodesic velocity is:
\begin{equation}
\pi_{\theta} = \frac{1}{m}\partial_{\theta}\mathcal{S}_{\ell} = \frac{\hbar}{m}\frac{\ell}{\alpha}.
\end{equation}
Since $\mathcal{P}_{\ell}$ is constant at this order, $u_{\theta} = 0$, and the complex velocity is purely real: $\eta_{\theta} = \pi_{\theta} = \frac{\hbar}{m}\frac{\ell}{\alpha}$.

For a circular loop $\gamma$ at fixed $t, r, z$ encircling the defect:
\begin{equation}
\frac{m}{\hbar}\oint_{\gamma} \eta_{\mu} dx^{\mu} = \frac{m}{\hbar}\int_0^{2\pi} \eta_{\theta} d\theta = \frac{m}{\hbar}\int_0^{2\pi} \frac{\hbar}{m}\frac{\ell}{\alpha} d\theta = \frac{2\pi\ell}{\alpha}. \label{eq:holonomy_toy}
\end{equation}

Comparing with the quantization condition \eqref{eq:holonomy_final}, $\frac{m}{\hbar}\oint_{\gamma} \eta_{\mu} dx^{\mu} = 2\pi n + \Delta\phi_{\text{top}}$, we identify:
\begin{equation}
\Delta\phi_{\text{top}} = 2\pi\ell\left(\frac{1}{\alpha} - 1\right).
\end{equation}
This is the classical topological offset produced by the deficit angle of the conical spacetime. For a GUT-scale cosmic string with $G\mu \sim 10^{-6}$, $\alpha \approx 0.999996$, and $\Delta\phi_{\text{top}} \approx 8\pi\ell \times 10^{-6}$ rad, which is in principle observable in precision interferometry. The integer $n$ in \eqref{eq:holonomy_final} is then the nearest integer to $\ell/\alpha$. This is a direct analog of the Aharonov-Bohm effect \cite{sakurai1995}.

The condition $\oint_{\gamma} d\ln\mathcal{P} = 0$ is automatically satisfied since $\mathcal{P}_{\ell}$ is constant. When the noise model includes spatial correlations that generate an angular-dependent $\mathcal{P}_{\ell}$, the stochastic velocity $u_{\theta}$ becomes non-zero, and the holonomy receives an additional contribution $\delta\phi_{\text{stoch}} = -\frac{1}{2}\oint_{\gamma} d\ln\mathcal{P}$. In that case, the full quantization condition becomes:
\begin{equation}
\frac{m}{\hbar}\oint_{\gamma} \eta_{\mu} dx^{\mu} = 2\pi n + \Delta\phi_{\text{top}} + \delta\phi_{\text{stoch}},
\end{equation}
where $\delta\phi_{\text{stoch}}$ is determined by the circulation of $u_{\mu}$ around the defect.

\section{Observable Signatures}

The theoretical framework developed above yields two distinct types of observable phases: a topological holonomy for non-simply connected spacetimes, and a geometric Berry phase for simply connected spacetimes. We discuss both, with emphasis on the latter as the primary target for near-term atom interferometry \cite{Abe2021}.

\subsection{Topological Holonomy for Non-Simply Connected Spacetimes}

When the spacetime $M$ possesses non-contractible loops---due to cosmic strings, black holes, or Planck-scale topological defects---the holonomy of the complex velocity is quantized according to the theorem of Section 5.1:
\begin{equation}
\frac{m}{\hbar}\oint_{\gamma} \eta_{\mu} dx^{\mu} = 2\pi n + \Delta\phi_{\mathrm{top}}.
\end{equation}
This is a direct analog of the Aharonov-Bohm effect \cite{sakurai1995}, where the magnetic flux is replaced by the topological offset $\Delta\phi_{\mathrm{top}}$ generated by the effective action $\mathcal{S}$. The phase shift is purely topological: it depends only on the winding number of $\gamma$ around the defect, not on the detailed geometry of the path.

For a cosmic string with deficit angle $\alpha = 1 - 4G\mu$, the toy model of Section 5.2 yields $\Delta\phi_{\mathrm{top}} = 2\pi\ell(1/\alpha - 1)$. If such a defect passes through the interferometer area during a measurement, the phase shift would manifest as a discrete jump as the defect enters and exits the sensitive region. The probability of such an event depends on the unknown density of cosmic strings or Planck-scale defects and is not the primary observable for near-term experiments.

\subsection{Berry Phase in Simply Connected Spacetimes}

For experiments conducted in simply connected spacetimes, such as the approximately Minkowski background of MAGIS-100 \cite{Abe2021}, the loops traversed by the atomic wave packets are contractible. In this case, the holonomy of the classical complex velocity $\eta_{\mu}$ is trivial because the classical connection $D_{\mu} = \nabla_{\mu} - i\frac{m}{\hbar}\eta_{\mu}$ is flat ($[D_{\mu}, D_{\nu}] = 0$). However, the \emph{quantum state} $|\Psi_x\rangle$ on the Hilbert space $\mathcal{H}_0 = L^2(\mathcal{C}, \nu_0)$ defines a non-trivial geometric structure over the parameter manifold $M$ that gives rise to an observable Berry phase \cite{berry_1984}.

\subsubsection{Berry Connection and Curvature}

The family of states $|\Psi_x\rangle = \sqrt{\mathcal{P}} e^{i\mathcal{S}/\hbar}$ parametrized by spacetime coordinates $x \in M$ defines the Berry connection \cite{berry_1984}:
\begin{equation}
\mathcal{A}_{\mu}^{\mathrm{B}}(x) := i \langle \Psi_x | \partial_{\mu} \Psi_x \rangle_0.
\end{equation}
Using the identity $\partial_{\mu} \Psi_x = \frac{im}{\hbar} \eta_{\mu} \Psi_x$ derived in Section 3.2, we obtain:
\begin{equation}\label{eq:berry_connection}
\mathcal{A}_{\mu}^{\mathrm{B}}(x) = -\frac{m}{\hbar} \langle \eta_{\mu} \rangle_x = -\frac{m}{\hbar} \langle \pi_{\mu} \rangle_x + i \frac{m}{\hbar} \langle u_{\mu} \rangle_x.
\end{equation}
The imaginary part involves the expectation value of the stochastic velocity. Using $u_{\mu} = \frac{\hbar}{2m}\nabla_{\mu}\ln\mathcal{P}$ and the normalization condition $\int_{\mathcal{C}} \mathcal{P}[\Phi](x) d\nu_0[\Phi] = 1$ for all $x \in M$ (Assumption 2.2 of \cite{mezadominguez2026bundle}), we compute:
\begin{equation}
\langle u_{\mu} \rangle_x = \int_{\mathcal{C}} \frac{\hbar}{2m} (\nabla_{\mu} \ln \mathcal{P}) \, \mathcal{P} \, d\nu_0 = \frac{\hbar}{2m} \int_{\mathcal{C}} \nabla_{\mu} \mathcal{P} \, d\nu_0 = \frac{\hbar}{2m} \nabla_{\mu} \int_{\mathcal{C}} \mathcal{P} \, d\nu_0 = \frac{\hbar}{2m} \nabla_{\mu} 1 = 0.
\end{equation}
Thus, the Berry connection is purely real: $\mathcal{A}_{\mu}^{\mathrm{B}}(x) = -\frac{m}{\hbar} \langle \pi_{\mu} \rangle_x$, and there is no loss of visibility from the imaginary part at this order.

The Berry curvature is the field strength associated with this connection:
\begin{equation}\label{eq:berry_curvature_def}
\Omega_{\mu\nu}^{\mathrm{B}} := \partial_{\mu} \mathcal{A}_{\nu}^{\mathrm{B}} - \partial_{\nu} \mathcal{A}_{\mu}^{\mathrm{B}}.
\end{equation}
A direct calculation using the Leibniz rule and the definition of the expectation value yields:
\begin{align}
\Omega_{\mu\nu}^{\mathrm{B}} &= -\frac{m}{\hbar} \int_{\mathcal{C}} \Big[ (\partial_{\mu} \eta_{\nu} - \partial_{\nu} \eta_{\mu}) \mathcal{P} + \eta_{\nu} \partial_{\mu} \mathcal{P} - \eta_{\mu} \partial_{\nu} \mathcal{P} \Big] d\nu_0 \nonumber \\
&= -\frac{m}{\hbar} \int_{\mathcal{C}} \Big[ \eta_{\nu} \partial_{\mu} \mathcal{P} - \eta_{\mu} \partial_{\nu} \mathcal{P} \Big] d\nu_0,
\end{align}
where we used the flatness condition $\partial_{\mu} \eta_{\nu} - \partial_{\nu} \eta_{\mu} = 0$ for a fixed field configuration $\Phi \in \mathcal{C}$. Substituting $\partial_{\mu} \mathcal{P} = \frac{2m}{\hbar} \mathcal{P} u_{\mu}$ from the definition of the stochastic velocity:
\begin{equation}\label{eq:berry_curvature_final}
\Omega_{\mu\nu}^{\mathrm{B}} = -\frac{2m^2}{\hbar^2} \Big( \langle \eta_{\nu} u_{\mu} \rangle_x - \langle \eta_{\mu} u_{\nu} \rangle_x \Big).
\end{equation}
Decomposing $\eta_{\mu} = \pi_{\mu} - i u_{\mu}$, the real and imaginary parts are:
\begin{align}
\operatorname{Re} \Omega_{\mu\nu}^{\mathrm{B}} &= -\frac{2m^2}{\hbar^2} \Big( \langle \pi_{\nu} u_{\mu} \rangle_x - \langle \pi_{\mu} u_{\nu} \rangle_x \Big), \label{eq:berry_real} \\
\operatorname{Im} \Omega_{\mu\nu}^{\mathrm{B}} &= \frac{2m^2}{\hbar^2} \Big( \langle u_{\nu} u_{\mu} \rangle_x - \langle u_{\mu} u_{\nu} \rangle_x \Big) = 0. \label{eq:berry_imag}
\end{align}

\subsubsection{Physical Origin of the Berry Curvature}

The crucial point is that $\Omega_{\mu\nu}^{\mathrm{B}} \neq 0$ even though the classical connection is flat. The non-vanishing curvature arises from the \emph{configurational average} over $\mathcal{C}$. The cross-correlator $\langle \pi_{\nu} u_{\mu} \rangle_x$ in \eqref{eq:berry_real} couples the deterministic and stochastic velocities. This is a distinct signature of the SQG framework: in standard quantum mechanics without stochastic gravity, $u_{\mu}$ and $\pi_{\mu}$ are both determined by the same wave function $\Psi_x = \sqrt{\rho} e^{iS/\hbar}$ and are not independent fields. Their correlation is fixed by the wave function alone and does not generate an independent Berry curvature beyond what is already encoded in the Fubini-Study metric.

In SQG, $u_{\mu}$ encodes the variance of gravitational fluctuations via the cumulant expansion \eqref{eq:cumulantP}:
\begin{equation}
u_{\mu} = -\frac{1}{2m\hbar} \nabla_{\mu} \langle S_1^2 \rangle_h,
\end{equation}
providing an independent physical field whose correlation with $\pi_{\mu}$ generates the Berry curvature \eqref{eq:berry_curvature_final}. This cross-correlation is a direct probe of the interplay between the classical phase $\mathcal{S}$ (through $\pi_{\mu} = \frac{1}{m}\nabla_{\mu}\mathcal{S}$) and the variance of the matter-gravity coupling $\langle S_1^2 \rangle_h$.

\subsubsection{Berry Phase in Atom Interferometry}

For a closed loop $\gamma$ in spacetime traversed by the atomic wave packet, the geometric Berry phase is:
\begin{equation}\label{eq:berry_phase}
\gamma_B = \oint_{\gamma} \mathcal{A}_{\mu}^{\mathrm{B}} dx^{\mu} = \iint_S \Omega_{\mu\nu}^{\mathrm{B}} d\Sigma^{\mu\nu},
\end{equation}
where $S$ is any surface bounded by $\gamma$, and $d\Sigma^{\mu\nu}$ is the area element. Using the expression \eqref{eq:berry_real} for the Berry curvature:
\begin{equation}
\gamma_B = -\frac{2m^2}{\hbar^2} \iint_S \Big( \langle \pi_{\nu} u_{\mu} \rangle_x - \langle \pi_{\mu} u_{\nu} \rangle_x \Big) d\Sigma^{\mu\nu}.
\end{equation}
In the non-relativistic limit relevant for MAGIS-100 \cite{Abe2021}, the dominant contribution comes from the spatial components. For an interferometer oriented in the $x$-$z$ plane with the sensitive axis along $z$, the Berry phase simplifies to:
\begin{equation}
\gamma_B \approx -\frac{2m^2}{\hbar^2} \iint_S \Big( \langle \pi_z u_x \rangle_x - \langle \pi_x u_z \rangle_x \Big) dx dz.
\end{equation}
This phase is proportional to the enclosed area and to the antisymmetric part of the cross-correlation tensor $\langle \pi_i u_j \rangle_x$.

\subsection{Connection to the Quantum Fisher Metric}

The Berry curvature \eqref{eq:berry_curvature_final} and the quantum Fisher metric \eqref{eq:fisher_final} provide complementary probes of the stochastic velocity field. The Fisher metric involves the \emph{symmetric} covariances:
\begin{equation}
g_{\mu\nu}^{FS} = \frac{4m^2}{\hbar^2}\Big[\mathrm{Cov}(\pi_{\mu}, \pi_{\nu}) + \mathrm{Cov}(u_{\mu}, u_{\nu})\Big],
\end{equation}
while the Berry curvature involves the \emph{antisymmetric} cross-correlation:
\begin{equation}
\Omega_{\mu\nu}^{\mathrm{B}} = -\frac{2m^2}{\hbar^2}\Big(\langle \pi_{\nu} u_{\mu} \rangle_x - \langle \pi_{\mu} u_{\nu} \rangle_x\Big).
\end{equation}
Together, they fully characterize the quantum geometry of the state family $|\Psi_x\rangle$ on the parameter manifold $M$. The Fisher metric bounds the precision of parameter estimation via the quantum Cramér-Rao bound \cite{braunstein1994,paris2009}, while the Berry phase provides a direct interferometric signature. The SLD isomorphism $\widetilde{\mathcal{T}}$ of Section 3.2 links both quantities to the complex velocity $\eta_{\mu}$, establishing a unified operational framework for testing stochastic gravity with atom interferometry.

\section{Conclusion}

We have established a rigorous geometric framework unifying the deterministic and stochastic aspects of quantum mechanics through a complex velocity $\eta_{\mu} = \pi_{\mu} - i u_{\mu}$ emerging from a stochastic gravitational background. The main results are:

\begin{enumerate}
    \item \textbf{Double bundle geometry:} $\eta_{\mu}$ is a section of the pullback bundle $E = \pi_2^*(T^*M)$ over $\mathcal{C}\times M$, encoding both functional field dependence and local spacetime structure. The functionals $\mathcal{S}[\Phi]$ and $\mathcal{P}[\Phi]$ are explicitly derived from the cumulant expansion of gravitational fluctuations \cite{kubo_1962,van_kampen_1974,van_kampen_2007}.

    \item \textbf{Flat $U(1)$ connection and complex geodesic equation:} $\eta_{\mu}$ defines a flat connection $D_{\mu} = \nabla_{\mu} - i\frac{m}{\hbar}\eta_{\mu}$ whose curvature vanishes identically. The flatness is the geometric origin of the unified complex geodesic equation $\mathcal{L}_{\eta}\eta = d(|\eta|^2)$, which collapses the coupled Hamilton-Jacobi and potential flow dynamics into a single geometric statement \cite{Escobar-Aguilar:2023ekv}.

    \item \textbf{Information geometry:} Through the bundle isomorphism $\widetilde{\mathcal{T}}$ established in \cite{mezadominguez2026bundle}, $\eta_{\mu}$ maps to the symmetric logarithmic derivative of quantum estimation theory \cite{helstrom_1976,holevo_1982,braunstein1994,paris2009}. The stochastic velocity $u_{\mu}$ is the information-geometric potential, with the quantum Fisher metric $g_{\mu\nu}^{FS} \propto \langle u_{\mu}u_{\nu}\rangle_{\mathcal{P}}$ and the von Neumann entropy $S_{\text{vN}} \sim \operatorname{Tr}\ln\langle\eta\eta^*\rangle$ \cite{nielsen_chuang_2010,vedral_2002,bennett_1996}. Both quantities are intrinsic, independent of the reference Gaussian measure \cite{feldman_1958,bogachev1998}.

    \item \textbf{Geometrization, not hidden variables:} This framework constitutes a genuine geometrization of quantum mechanics \cite{madelung_1927,bohm_1952,holland_1993}. The stochastic velocity $u_{\mu}$ is the imprint of spacetime fluctuations on matter, not an ad hoc postulate. The Born rule emerges from the cumulant expansion as $\mathcal{P} = \exp(-\langle S_1^2\rangle_h/\hbar^2)$. Non-locality is preserved through the functional dependence of $\eta_{\mu}$ on the entire configuration space $\mathcal{C}$, geometrically encoded in the pullback bundle structure \cite{kriegl_michor_1997}. The Bohmian quantum potential is replaced by the flat $U(1)$ connection and its associated holonomy. No supplementary hidden variables or particle trajectories are required; the wave function retains its fundamental status.
    
    \item \textbf{Holonomy quantization and toy model:} The tension between flatness and multi-valuedness is resolved by recognizing that single-valuedness of $\mathcal{K}$ quantizes the total phase: $\frac{m}{\hbar}\oint_{\gamma}\eta_{\mu}dx^{\mu} = 2\pi n + \Delta\phi_{\text{top}}$. The toy model of a scalar field on a conical spacetime \cite{nakahara_2003,frankel_2011} demonstrates this analytically: the classical Aharonov-Bohm phase $2\pi\ell/\alpha$ \cite{sakurai1995} receives a stochastic correction $\delta\phi_{\text{stoch}}$ from metric fluctuations, providing a clear experimental target for atom interferometry \cite{Abe2021}.

    \item \textbf{Observable signatures:} The quantized holonomy predicts discrete topological phase jumps distinguishable from continuous classical noise. Together with the Berry curvature \eqref{eq:berry_curvature_final} and the Fisher metric \eqref{eq:fisher_final}, this framework provides a complete geometric characterization of the quantum state manifold, testable with current and near-future atom interferometry \cite{Abe2021}.
\end{enumerate}

This framework establishes a closed triangle between stochastic gravity, quantum information geometry \cite{amari_2016,frieden_1998,caticha_2004,fujiwara_2015,fujiwara_2019}, and observable topological phases \cite{berry_1984,sakurai1995}, offering a testable geometric foundation for the quantum-classical transition.

\bibliographystyle{plainnat}
\bibliography{referencias}

\end{document}